\documentclass{article}

\usepackage{arxiv}
\usepackage{caption}
\usepackage[utf8]{inputenc} 
\usepackage[T1]{fontenc}    
\usepackage{hyperref}       
\usepackage{url}            
\usepackage{booktabs}       
\usepackage{amsfonts}       
\usepackage{nicefrac}       
\usepackage{microtype}      
\usepackage{lipsum}
\usepackage{morefloats}
\usepackage{upgreek}
\usepackage{color}
\usepackage{multirow}
\usepackage{mathrsfs}
\usepackage{amsmath,amssymb,amsthm,graphicx}
\usepackage{dsfont}
\usepackage{enumitem}
\usepackage{eufrak}
\usepackage{hyperref}
\usepackage{arydshln}
\usepackage{lscape}
\newtheoremstyle{thm}
{9pt}
{9pt}
{\itshape}
{}
{\bfseries}
{.}
{ }
{}
\theoremstyle{thm}
\newtheorem{theorem}{Theorem}[section]

\newtheorem{corollary}[theorem]{Corollary}

\newtheoremstyle{def}
{9pt}
{9pt}
{}
{}
{\bfseries}
{.}
{ }
{}
\theoremstyle{def}

\newcommand{\R}{\mathbb{R}} 
\newcommand{\E}{\mathbb{E}} 
\newcommand{\PP}{\mathbb{P}} 

\newcommand{\fse}{\overset{a.s.}{\longrightarrow}}
    \def\cd{\stackrel{\mathcal{D}}{\longrightarrow}}
    \def\cp{\stackrel{\mathcal{\PP}}{\longrightarrow}}

\renewcommand{\footnoterule}{%
	\kern -3.5pt
	\hrule width \textwidth height 1pt
	\kern 3.5pt
}

\makeatletter
\def\blfootnote{\xdef\@thefnmark{}\@footnotetext}
\makeatother

\title{On a new test of fit to the beta distribution}


\author{Bruno Ebner\\
 Institute of Stochastics, \\
Karlsruhe Institute of Technology (KIT), \\
Englerstr. 2, D-76133 Karlsruhe, \\
Germany.\\
\href{mailto:Bruno.Ebner@kit.edu}{Bruno.Ebner@kit.edu}\\
\And
Shawn C. Liebenberg\\
School of Mathematical and Statistical Sciences,\\ North-West University,\\ South Africa. \\
\href{mailto:Shawn.Liebenberg@nwu.ac.za}{Shawn.Liebenberg@nwu.ac.za}
}

\begin{document}

\date{\today}
\maketitle

\blfootnote{ {\em MSC 2010 subject
classifications.} Primary 62G10 Secondary 62E10}
\blfootnote{
{\em Key words and phrases} Goodness-of-fit tests; beta distribution; Hilbert-space valued random elements; parametric bootstrap}

\begin{abstract}
We propose a new $L^2$-type goodness-of-fit test for the family of beta distributions based on a conditional moment characterisation. The asymptotic null distribution is identified, and since it depends on the underlying parameters, a parametric bootstrap procedure is proposed. Consistency against all alternatives that satisfy a convergence criterion is shown, and a Monte Carlo simulation study indicates that the new procedure outperforms most of the classical tests. Finally, the procedure is applied to a real data set related to air humidity. \end{abstract}

\section{Introduction}
\label{Intro}
The beta distribution is widely used as statistical model in a variety of applied fields, such as meteorology, environmental research, see \cite{F:2004}, geology, see \cite{HW:1989}, communication theory, see \cite{ALT:1990} and traffic flow, see \cite{R:1991}. Due to the support of the distribution on the unit interval $[0,1]$, it is a flexible model for relative frequency data and random probabilities. In Bayesian inference, this family of distributions is used as a conjugate prior probability distribution for binomial and geometric distributions, and it gives its name to the so-called beta wavelets, which are continuous unicycle wavelets of compact support. However, in classical data analysis, before assuming that data stems from a beta distribution, every statistician should check this assumption prior to starting any serious statistical inference. To be precise, we write shorthand ${\cal B}(\alpha,\beta)$, $\alpha,\beta>0$, for the two parameter beta distribution, defined by the density
\begin{equation}\label{eq:dens}
f(x;\alpha,\beta)=\frac{x^{\alpha-1}(1-x)^{\beta-1}}{B(\alpha,\beta)},\quad 0\le x\le1,
\end{equation}
where $B(\alpha,\beta)=\frac{\Gamma(\alpha)\Gamma(\beta)}{\Gamma(\alpha+\beta)}$ is the beta function, and $\Gamma(\cdot)$ denotes the gamma function. Both $\alpha$ and $\beta$ are shape parameters. The corresponding cumulative distribution function is denoted by $F_{(\alpha,\beta)}(t)=B(t;\alpha,\beta)/B(\alpha,\beta)$, $t\in[0,1]$, where $B(\cdot;\alpha,\beta)$ is the incomplete beta function, and we write $\mathscr{B}:=\{{\cal B}(\alpha,\beta):\alpha,\beta>0\}$ for the family of beta distributions. See \cite{JKB:1995}, chapter 25, for more information on the family of beta distributions. To be specific, let $X,X_1,X_2,\ldots$ be independent and identically distributed (i.i.d.) and $[0,1]$-valued random variables defined on a common probability space $(\Omega,\mathcal{A},\mathbb{P})$, and denote the distribution of $X$ by $\mathbb{P}^X$. We test the composite hypothesis
\begin{equation}\label{eq:H0}
H_0:\;\mathbb{P}^X\in\mathscr{B}
\end{equation}
against general alternatives. The literature for this testing problem is hitherto comparably scarce. Apart from classical omnibus procedures, which are based on the empirical distribution function, such as the Kolmogorov-Smirnov test, the Cram\'{e}r-von Mises test or the Anderson-Darling test, only one goodness-of-fit procedure to the beta model has been considered: In \cite{RA:2018} the authors propose a procedure based on a characterisation of the beta distribution via an initial value problem of a second order differential equation, where the Laplace transform of the beta distribution constitutes the unique solution. The test statistic is of weighted $L^2$-type, and it uses the empirical Laplace transform. Note that the results presented in \cite{R:2011} are not reviewed here, since the methodology first uses an estimation procedure to fix the parameters, and it then tests a single hypothesis, in contrast to the composite testing problem treated in the present paper.

This article studies a test procedure for (\ref{eq:H0}) based on a characterisation of the family of beta distributions given in Theorem 1 of \cite{A:1991}. Adapted to the standard two parameter beta distribution, the theorem reads as follows.
\begin{theorem}
Let $X$ be a random variable taking values in $[0,1]$. Then $X\sim {\cal B}(\alpha,\beta)$ for $\alpha,\beta>0$ if and only if
\begin{equation*}
\E(X|X\ge t)=\frac{\alpha}{\alpha+\beta}\left(1+\frac{t^\alpha(1-t)^{\beta}}{\alpha B(\alpha,\beta) \mathbb{P}(X\ge t)}\right),\quad 0\le t\le 1.
\end{equation*}
\end{theorem}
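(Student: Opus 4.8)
The plan is to establish the two implications separately. The direction ``$X\sim{\cal B}(\alpha,\beta)\Rightarrow$ identity'' (necessity of the condition) is a direct computation with incomplete beta integrals, while the direction ``identity $\Rightarrow X\sim{\cal B}(\alpha,\beta)$'' (sufficiency) is obtained by turning the conditional moment identity into a first order differential equation for the survival function $\overline{F}(t):=\PP(X\ge t)$ and identifying its solution. This second step is in essence the content of Theorem~1 of \cite{A:1991}, so one may alternatively just check that the general statement there specialises to the displayed formula and invoke its uniqueness assertion.

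\emph{Necessity.} Assume $X\sim{\cal B}(\alpha,\beta)$. Multiplying the asserted identity by $(\alpha+\beta)B(\alpha,\beta)\PP(X\ge t)$ and using $\PP(X\ge t)=\int_t^1 f(x;\alpha,\beta)\,dx$ together with $\E\big(X\mathbf{1}\{X\ge t\}\big)=\int_t^1 x\,f(x;\alpha,\beta)\,dx$, the claim reduces to the analytic identity
\begin{equation*}
(\alpha+\beta)\int_t^1 x^{\alpha}(1-x)^{\beta-1}\,dx=\alpha\int_t^1 x^{\alpha-1}(1-x)^{\beta-1}\,dx+t^{\alpha}(1-t)^{\beta},\qquad 0\le t\le1.
\end{equation*}
Both sides vanish at $t=1$ and both have derivative $-(\alpha+\beta)t^{\alpha}(1-t)^{\beta-1}$ on $(0,1)$, so they coincide on $[0,1]$. (Equivalently, one integrates $\int_t^1 x^{\alpha}(1-x)^{\beta-1}\,dx$ by parts and uses $x^{\alpha-1}(1-x)^{\beta}=x^{\alpha-1}(1-x)^{\beta-1}-x^{\alpha}(1-x)^{\beta-1}$ to solve the resulting linear relation, which also explains the factor $\alpha/(\alpha+\beta)=B(\alpha+1,\beta)/B(\alpha,\beta)$.)

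\emph{Sufficiency.} Assume now that the law of $X$ satisfies the identity. Using the elementary representation $\E\big(X\mathbf{1}\{X\ge t\}\big)=t\,\overline{F}(t)+\int_t^1\overline{F}(u)\,du$, valid for every $[0,1]$-valued random variable (because $\PP(X>u)$ and $\PP(X\ge u)$ differ on an at most countable set), the identity is equivalent to
\begin{equation*}
\Big(t-\tfrac{\alpha}{\alpha+\beta}\Big)\overline{F}(t)+\int_t^1\overline{F}(u)\,du=\frac{t^{\alpha}(1-t)^{\beta}}{(\alpha+\beta)B(\alpha,\beta)},\qquad 0\le t\le1.
\end{equation*}
Write $c:=\alpha/(\alpha+\beta)\in(0,1)$. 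Since the right-hand side and $t\mapsto\int_t^1\overline{F}(u)\,du$ are continuous, $t\mapsto(t-c)\overline{F}(t)$ is continuous, which forces $\overline{F}$ to have no atoms on $(0,1)\setminus\{c\}$; then $\overline{F}$ is in fact $C^{1}$ there, and differentiating the displayed equation and cancelling the common factor $t-c$ gives $\overline{F}'(t)=-f(t;\alpha,\beta)$ on $(0,1)\setminus\{c\}$. Integrating separately on $(0,c)$ and on $(c,1)$ yields $\overline{F}(t)=a-F_{(\alpha,\beta)}(t)$ and $\overline{F}(t)=b-F_{(\alpha,\beta)}(t)$ on the two subintervals, with $a=\overline{F}(0^+)$ and $b=\overline{F}(1^-)+1$; monotonicity and the bounds $0\le\overline{F}\le1$ give $a\le1$, $b\ge1$ and $a\ge b$ (any jump of $\overline{F}$ at $c$ is downward), whence $a=b=1$. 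Using left-continuity of $t\mapsto\PP(X\ge t)$ one concludes $\PP(X\ge t)=1-F_{(\alpha,\beta)}(t)$ for all $t\in[0,1]$, i.e.\ $X\sim{\cal B}(\alpha,\beta)$.

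The only delicate point, as I see it, is the regularity bookkeeping in the sufficiency part: the functional equation determines $\overline{F}$ only implicitly, so one must first exclude atoms and secure enough smoothness to differentiate, and the vanishing of the prefactor $t-c$ at $t=\alpha/(\alpha+\beta)$ must be handled by splitting the interval there and then recovering the two integration constants from the monotonicity and normalisation of $\overline{F}$. None of this is deep, but it is precisely the part that makes it convenient to lean on Theorem~1 of \cite{A:1991}, where the passage from the conditional-mean function to uniqueness of the underlying distribution is carried out in generality.
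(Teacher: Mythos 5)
The paper gives no proof of this theorem at all: it is imported (after specialising to the standard two-parameter case) from Theorem 1 of \cite{A:1991}, so there is no internal argument to compare yours against. Your self-contained proof is correct. For necessity, clearing denominators reduces the claim to
\begin{equation*}
(\alpha+\beta)\int_t^1 x^{\alpha}(1-x)^{\beta-1}\,dx=\alpha\int_t^1 x^{\alpha-1}(1-x)^{\beta-1}\,dx+t^{\alpha}(1-t)^{\beta},
\end{equation*}
and your check that both sides vanish at $t=1$ and share the derivative $-(\alpha+\beta)t^{\alpha}(1-t)^{\beta-1}$ is right. The sufficiency direction is the substantive one, and the regularity bookkeeping you flag as delicate is handled correctly: continuity of $(t-c)\overline{F}(t)$ with $c=\alpha/(\alpha+\beta)$ excludes atoms off $c$, the differentiated equation collapses to $(t-c)\overline{F}'(t)=-(t-c)f(t;\alpha,\beta)$ so the factor $t-c$ cancels away from $c$, and the two integration constants are pinned down by $a\le 1\le b$ together with $a\ge b$ from monotonicity at the possible jump point. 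One point worth making explicit: the identity as stated involves $\E(X\mid X\ge t)$ and is only meaningful where $\PP(X\ge t)>0$, whereas your sufficiency argument really establishes the equivalence for the cleared-denominator form valid for all $t\in[0,1]$; that is precisely the form recorded in Corollary \ref{cor:char} and the one used to build $T_n$, so nothing is lost, but the two readings should be identified at the outset. Your proof thus supplies a verification that the paper delegates entirely to the citation.
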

We denote by $\mathbf{1}\{\cdot\}$ the indicator function, with this notation, a short calculation yields the following equivalent characterisation, which is the rationale for the new test statistic.
\begin{corollary}\label{cor:char}
Let $X$ be a random variable taking values in $[0,1]$. Then $X\sim {\cal B}(\alpha,\beta)$ for $\alpha,\beta>0$ if and only if
\begin{equation*}
(\alpha+\beta)\E\left(X\mathbf{1}\{X\ge t\}\right)=\alpha\E\mathbf{1}\{X\ge t\}+\frac{t^\alpha(1-t)^{\beta}}{B(\alpha,\beta)},\quad 0\le t\le 1.
\end{equation*}
\end{corollary}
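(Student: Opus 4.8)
The plan is to derive the stated identity directly from Theorem 1.1 (the Asadi characterisation) by clearing denominators and then separately handling the boundary value $t=1$, where the conditional expectation $\E(X\mid X\ge t)$ may be ill-defined. I would begin by assuming $X\sim{\cal B}(\alpha,\beta)$ and fixing $t\in[0,1)$; then $\PP(X\ge t)=1-F_{(\alpha,\beta)}(t)>0$, so conditioning on $\{X\ge t\}$ is legitimate and
\[
\E(X\mid X\ge t)=\frac{\E\!\left(X\mathbf{1}\{X\ge t\}\right)}{\PP(X\ge t)}=\frac{\E\!\left(X\mathbf{1}\{X\ge t\}\right)}{\E\mathbf{1}\{X\ge t\}}.
\]
Substituting this into the equation of Theorem 1.1 and multiplying both sides by $(\alpha+\beta)\,\E\mathbf{1}\{X\ge t\}$ yields, after a one-line rearrangement, exactly
\[
(\alpha+\beta)\,\E\!\left(X\mathbf{1}\{X\ge t\}\right)=\alpha\,\E\mathbf{1}\{X\ge t\}+\frac{t^\alpha(1-t)^{\beta}}{B(\alpha,\beta)},
\]
using that $\alpha B(\alpha,\beta)\cdot\frac{1}{\alpha B(\alpha,\beta)}=1$. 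For $t=1$ both $\E(X\mathbf{1}\{X\ge 1\})$ and $\E\mathbf{1}\{X\ge 1\}$ vanish (since $X<1$ a.s. under any ${\cal B}(\alpha,\beta)$), and the term $t^\alpha(1-t)^\beta=0$ as $\beta>0$, so the displayed identity holds trivially at $t=1$ as $0=0$. This establishes the forward implication for all $t\in[0,1]$.

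For the converse, suppose the displayed identity of the corollary holds for all $t\in[0,1]$. I would first observe that for $t\in[0,1)$ one has $\E\mathbf{1}\{X\ge t\}=\PP(X\ge t)\ge\PP(X\ge\tfrac{1+t}{2})$ and, more to the point, that if $\PP(X\ge t)=0$ for some $t_0<1$ then the identity forces $t_0^\alpha(1-t_0)^\beta=0$, which is impossible for $t_0\in(0,1)$; hence either $X$ is degenerate at $0$ (excluded since then the identity fails at small $t$ unless one checks — actually degeneracy at $0$ gives left side $0$ and right side $\alpha\mathbf 1\{0\ge t\}+t^\alpha(1-t)^\beta/B(\alpha,\beta)$, which is nonzero for $t\in(0,1)$, a contradiction) or $\PP(X\ge t)>0$ for all $t<1$. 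Consequently $\E(X\mid X\ge t)$ is well-defined for every $t\in[0,1)$, and dividing the corollary's identity by $(\alpha+\beta)\E\mathbf 1\{X\ge t\}$ reverses the algebra above to recover precisely the equation in Theorem 1.1 for all $t\in[0,1)$; since that equation characterises ${\cal B}(\alpha,\beta)$ by Theorem 1.1 (the "only if" direction of which requires the identity merely on $[0,1)$, as $F_{(\alpha,\beta)}$ is continuous), we conclude $X\sim{\cal B}(\alpha,\beta)$.

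The only genuinely delicate point, and the one I would treat carefully rather than wave through, is the handling of the set where $\PP(X\ge t)$ could be zero and the behaviour at the endpoint $t=1$: Theorem 1.1 as quoted ranges over $0\le t\le 1$, yet its conditional-expectation form is vacuous or undefined precisely at the boundary, so the equivalence between the two formulations hinges on showing that no information is lost by passing to the indicator form (which remains meaningful everywhere). I expect this to amount to the short degeneracy argument sketched above, so I do not anticipate a substantive obstacle — the corollary is, as the text says, a "short calculation" — but the write-up should make explicit that multiplying through by $\PP(X\ge t)$ is valid on $[0,1)$ and that the $t=1$ case is a trivial identity $0=0$, so that the stated range $0\le t\le 1$ is honest.
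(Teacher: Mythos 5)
Your argument is correct and is precisely the ``short calculation'' the paper alludes to: multiplying the identity of Theorem 1.1 by $(\alpha+\beta)\,\mathbb{P}(X\ge t)$ and rearranging, with the reverse division giving the converse. The extra care you take at $t=1$ and with the possible vanishing of $\mathbb{P}(X\ge t)$ is a sensible tightening of the same route rather than a different approach.
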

In the following, we write $\widehat{\alpha}_n,\widehat{\beta}_n$ for consistent estimators of $\alpha$ and $\beta$, i.e., we assume that  $(\widehat{\alpha}_n,\widehat{\beta}_n)\cp(\alpha,\beta)$ as $n\rightarrow\infty$ (such as maximum likelihood estimators, moment estimators, etc.). Here, and in the following $\cp$ denotes convergence in probability. Based on Corollary \ref{cor:char}, we propose the $L^2$-type test statistic
\begin{equation*}
T_{n}=n\int_{0}^1\left|\frac1n\sum_{j=1}^n\left(\left(\widehat{\alpha}_n+\widehat{\beta}_n\right)X_j-\widehat{\alpha}_n\right)\mathbf{1}\{X_j\ge t\}-\frac{t^{\widehat{\alpha}_n}(1-t)^{\widehat{\beta}_n}}{B(\widehat{\alpha}_n,\widehat{\beta}_n)}\right|^2\mbox{d}t.
\end{equation*}
Rejection of $H_0$ will be for large values of $T_n$. By direct calculations, we obtain the numerical stable representation
\begin{eqnarray}
T_{n}&=&\frac1n\sum_{i,j=1}^n\left(\left(\widehat{\alpha}_n+\widehat{\beta}_n\right)^2X_jX_k-\widehat{\alpha}_n\left(\widehat{\alpha}_n+\widehat{\beta}_n\right)(X_j+X_k)+\widehat{\alpha}_n^2\right)\min(X_j,X_k) \nonumber\\
&&-\frac{2B(\widehat{\alpha}_n+1,\widehat{\beta}_n+1)}{B(\widehat{\alpha}_n,\widehat{\beta}_n)}\sum_{j=1}^n\left(\left(\widehat{\alpha}_n+\widehat{\beta}_n\right)X_j-\widehat{\alpha}_n\right)F_{(\widehat{\alpha}_n+1,\widehat{\beta}_n+1)}(X_j)+n\frac{B\left(2\widehat{\alpha}_n+1,2\widehat{\beta}_n+1\right)}{B\left(\widehat{\alpha}_n,\widehat{\beta}_n\right)^2}.
\label{teststatTn}
\end{eqnarray}
The rest of the paper is organized as follows. In Section \ref{sec:AuN} we derive the asymptotic distribution of $T_n$ under the assumption of a beta law using a Hilbert space framework, and we show that the needed parametric bootstrap procedure is well calibrated. Section \ref{sec:cons} provides the consistency of the procedure under a convergence assumption for the estimators. A Monte Carlo simulation study, which is performed in Section \ref{sec:simu} indicates that the new test is a strong competitor to the classical procedures. The procedure is applied to real data sets in Section \ref{sec:real}, and we finish the paper by drawing conclusions and stating open problems in Section \ref{sec:CaO}.

\section{Asymptotics under the null hypothesis}\label{sec:AuN}
In this section we derive the asymptotic distribution under the null hypothesis. Due to the $L^2$-structure of the test statistic, a convenient setting is the separable Hilbert space $\mathbb{H}=L^2([0,1],\mathfrak{B},{\rm d}t)$ of (equivalence classes of) measurable functions $f:[0,1] \rightarrow \R$
satisfying $\int_0^1 |f(t)|^2 \, {\rm d}t < \infty$. Here, $\mathfrak{B}$ denotes the Borel sigma-field on $[0,1]$. The scalar product and the norm in $\mathbb{H}$ will be denoted by
\begin{equation*}
\langle f,g \rangle_{\mathbb{H}} = \int_0^1 f(t)g(t)\, {\rm d}t, \quad \|f\|_{\mathbb{H}} = \langle f,f \rangle_{\mathbb{H}}^{1/2}, \quad f,g \in \mathbb{H},
\end{equation*}
respectively. In view of the bootstrap procedure described later, we consider a slightly more general approach: Let $X_{n,1},\ldots,X_{n,n}$ be a triangular array of rowwise i.i.d. random variables, and suppose $X_{n,1}\sim {\cal B}(\alpha_n,\beta_n)$ for some sequence of parameters $(\alpha_n,\beta_n)\in(0,\infty)^2$, where $\lim_{n\rightarrow\infty}(\alpha_n,\beta_n)=(\alpha_0,\beta_0)$. In the following, let $(\widehat{\alpha}_n,\widehat{\beta}_n)$ be the maximum-likelihood estimators (MLE) of $(\alpha_n,\beta_n)$. Note that we have the linear representation (see \cite{BD:2015}, Section 6.2.1)
\begin{equation*}
\sqrt{n}\left((\widehat{\alpha}_n,\widehat{\beta}_n)^\top-(\alpha_n,\beta_n)^\top\right)=\frac{1}{\sqrt{n}}\sum_{j=1}^n\ell(X_{n,j},\alpha_n,\beta_n)+o_\mathbb{P}(1),
\end{equation*}
where $\ell(x,\alpha,\beta)=I_1(\alpha,\beta)^{-1}\nabla\log(f(x;\alpha,\beta))$, $I_1$ is the Fisher-information matrix, $\nabla\log(f(\cdot))$ is the score vector, $x^\top$ stands for the transpose of a vector $x$, and $o_\mathbb{P}(1)$ denotes a term that converges to 0 in probability. Direct calculations involving the density $f$ in (\ref{eq:dens}) show
\begin{eqnarray*}
\nabla\log(f(x;\alpha,\beta))\!\!\!\!\!&\!=\!&\!\!\!\!\!(\Psi(\alpha+\beta)-\Psi(\alpha)+\log(x),\Psi(\alpha+\beta)-\Psi(\beta)+\log(1-x))^\top,\\
I_1(\alpha,\beta)^{-1}\!\!\!\!\!&=&\!\!\!\!\!\left((\Psi_1(\alpha)+\Psi_1(\beta))\Psi_1(\alpha+\beta)\!-\!\Psi_1(\alpha)\Psi_1(\beta)\right)^{-1}\!\!\left(\begin{array}{cc}\!\!\Psi_1(\alpha+\beta)\!-\!\Psi_1(\beta) & -\Psi_1(\alpha+\beta) \\ -\Psi_1(\alpha+\beta) & \Psi_1(\alpha+\beta)\!-\!\Psi_1(\alpha)\!\!\end{array}\right),
\end{eqnarray*}
where $\Psi(\cdot)$ is the digamma, $\Psi_1(\cdot)$ is the trigamma function, see \cite{JKB:1995}, formula (25.32).
We define $g(t,\alpha,\beta)=\frac{t^\alpha(1-t)^\beta}{B(\alpha,\beta)}$, which is differentiable with respect to both positive parameters, and
\begin{equation*}
Z_n(t)=\frac{1}{\sqrt{n}}\sum_{j=1}^n\left(\left(\widehat{\alpha}_n+\widehat{\beta}_n\right)X_{n,j}-\widehat{\alpha}_n\right)\mathbf{1}\{X_{n,j}\ge t\}-g(t,\widehat{\alpha}_n,\widehat{\beta}_n),\quad 0\le t\le1,
\end{equation*}
as well as the processes
\begin{eqnarray*}
\widetilde{Z}_n(t)&=&\frac{1}{\sqrt{n}}\sum_{j=1}^n\left(\left(\alpha_n+\beta_n\right)X_{n,j}-\alpha_n\right)\mathbf{1}\{X_{n,j}\ge t\}-g(t,\alpha_n,\beta_n)\\&&+\frac1n\sum_{j=1}^n\left(\begin{array}{c}(X_{n,j}-1)\mathbf{1}\{X_{n,j}\ge t\}-\frac{\partial}{\partial\alpha}g(t,\alpha_n,\beta_n)\\X_{n,j}\mathbf{1}\{X_{n,j}\ge t\}-\frac{\partial}{\partial\beta}g(t,\alpha_n,\beta_n)\end{array}\right)^\top\sqrt{n}\left(\begin{array}{c}\widehat{\alpha}_n - \alpha_n \\ \widehat{\beta}_n -\beta_n\end{array}\right),
\end{eqnarray*}
and
\begin{eqnarray*}
\breve{Z}_n(t)&=&\frac{1}{\sqrt{n}}\sum_{j=1}^n\left(\left(\alpha_n+\beta_n\right)X_{n,j}-\alpha_n\right)\mathbf{1}\{X_{n,j}\ge t\}+\ell(X_{n,j};\alpha_n,\beta_n)^\top\Upsilon(t,\alpha_n,\beta_n)-g(t,\alpha_n,\beta_n),
\end{eqnarray*}
where $\Upsilon(t,\alpha_n,\beta_n)=\mathbb{E}\left(\left(X_1-1, X_1 \right)^\top\mathbf{1}\{X_1\ge t\}\right)-\left(\frac{\partial}{\partial\alpha}g(t,\alpha_n,\beta_n),\frac{\partial}{\partial\beta}g(t,\alpha_n,\beta_n)\right)^\top$.
Note that by using a computer algebra system (like Maple or Mathematica), the expectation in $\Upsilon$ can be derived explicitly. In the following, we denote by $\cd$ weak convergence (or alternatively convergence in distribution), whenever random elements (or random variables) are considered.
\begin{theorem}\label{thm:cdZ}
Under the triangular array of row-wise independent $X_{n,j}\sim {\cal B}(\alpha_n,\beta_n)$, there exists a centred Gaussian element $Z$ of $\mathbb{H}$ with covariance kernel
\begin{equation}\label{eq:CovK}
K_Z(s,t)=\mathbb{E}(h_{\alpha_0,\beta_0}(X,s)h_{\alpha_0,\beta_0}(X,t)),\quad s,t\in[0,1],
\end{equation}
with
\begin{equation*}
h_{\alpha,\beta}(X,s)=\left(\left(\alpha+\beta\right)X-\alpha\right)\mathbf{1}\{X\ge s\}+\ell(X;\alpha,\beta)^\top\Upsilon(s,\alpha,\beta)-g(s,\alpha,\beta),\quad s\in[0,1],
\end{equation*}
such that $Z_n\cd Z$ in $\mathbb{H}$ as $n\rightarrow\infty$.
\end{theorem}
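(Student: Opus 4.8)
The plan is to prove $Z_n\cd Z$ in $\mathbb{H}$ by way of the two auxiliary processes introduced above. Concretely, I would first show that $\|Z_n-\widetilde{Z}_n\|_{\mathbb{H}}$ and $\|\widetilde{Z}_n-\breve{Z}_n\|_{\mathbb{H}}$ each tend to $0$ in probability, then establish a central limit theorem in $\mathbb{H}$ for $\breve{Z}_n$, and finally combine these facts via Slutsky's lemma in $\mathbb{H}$.

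For the first approximation, the key point is that the weight $((\alpha+\beta)x-\alpha)$ is affine in $(\alpha,\beta)$, so replacing $(\widehat{\alpha}_n,\widehat{\beta}_n)$ by $(\alpha_n,\beta_n)$ in the leading sum of $Z_n$ produces precisely $\sqrt{n}(\widehat{\alpha}_n-\alpha_n,\widehat{\beta}_n-\beta_n)^\top$ contracted with the empirical average of $((X_{n,j}-1)\mathbf{1}\{X_{n,j}\ge t\},X_{n,j}\mathbf{1}\{X_{n,j}\ge t\})^\top$. Combined with the first-order Taylor expansion of $g(t,\cdot)$ at $(\alpha_n,\beta_n)$, this is exactly what the correction term in $\widetilde{Z}_n$ compensates, leaving $Z_n(t)-\widetilde{Z}_n(t)$ equal to $\sqrt{n}$ times the second-order Taylor remainder of $g(t,\cdot)$ at $(\alpha_n,\beta_n)$. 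Since $g(t,\alpha,\beta)=t^\alpha(1-t)^\beta/B(\alpha,\beta)$ is twice continuously differentiable in $(\alpha,\beta)$ with second-order partial derivatives bounded uniformly in $t\in[0,1]$ on compact subsets of $(0,\infty)^2$ (the factor $t^\alpha$ absorbs the logarithmic endpoint singularities as long as $\alpha$ stays away from $0$), and since $(\widehat{\alpha}_n,\widehat{\beta}_n)\cp(\alpha_0,\beta_0)$ while $\sqrt{n}(\widehat{\alpha}_n-\alpha_n,\widehat{\beta}_n-\beta_n)=O_{\mathbb{P}}(1)$, this remainder is uniformly of order $O_{\mathbb{P}}(n^{-1/2})$ in $t$, hence $o_{\mathbb{P}}(1)$ in $\mathbb{H}$.

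For the second approximation, I would insert the stated linear representation $\sqrt{n}(\widehat{\alpha}_n-\alpha_n,\widehat{\beta}_n-\beta_n)^\top=n^{-1/2}\sum_{j}\ell(X_{n,j},\alpha_n,\beta_n)+o_{\mathbb{P}}(1)$ into the correction term of $\widetilde{Z}_n$ and replace the empirical average occurring there by its expectation under ${\cal B}(\alpha_n,\beta_n)$; by the definition of $\Upsilon$, that expectation minus the gradient of $g(t,\cdot)$ equals $\Upsilon(t,\alpha_n,\beta_n)$. Uniformly in $t$, the first substitution error is of the form $o_{\mathbb{P}}(1)\cdot O(1)$ (using that $\sup_t\|\Upsilon(t,\alpha_n,\beta_n)\|$ stays bounded), and the second is of the form $O_{\mathbb{P}}(n^{-1/2})\cdot O_{\mathbb{P}}(1)$ because $\frac1n\sum_j((X_{n,j}-1)\mathbf{1}\{X_{n,j}\ge t\},X_{n,j}\mathbf{1}\{X_{n,j}\ge t\})^\top$ minus its mean is a centred empirical process in $t$, uniformly of order $O_{\mathbb{P}}(n^{-1/2})$; both errors are negligible in $\mathbb{H}$. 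After this reduction $\breve{Z}_n(t)=n^{-1/2}\sum_{j=1}^n h_{\alpha_n,\beta_n}(X_{n,j},t)$ is a normalised sum of row-wise i.i.d.\ elements of $\mathbb{H}$, and these are centred: Corollary~\ref{cor:char} gives $\mathbb{E}(((\alpha_n+\beta_n)X-\alpha_n)\mathbf{1}\{X\ge t\})=g(t,\alpha_n,\beta_n)$, while $\mathbb{E}(\ell(X,\alpha_n,\beta_n))=0$.

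It then remains to apply a Lindeberg-type central limit theorem in $\mathbb{H}$ for triangular arrays of row-wise i.i.d.\ elements and to check its hypotheses: (i) the uniform second-moment bound $\sup_n\mathbb{E}\|h_{\alpha_n,\beta_n}(X_{n,1},\cdot)\|_{\mathbb{H}}^2<\infty$, valid because the indicator part of $h$ is bounded and the entries of $\ell(X,\alpha_n,\beta_n)$ --- built from $\log X$ and $\log(1-X)$ --- have finite moments of all orders, uniformly for $(\alpha_n,\beta_n)$ in a neighbourhood of $(\alpha_0,\beta_0)$; (ii) the convergence of the covariance kernels $\mathbb{E}(h_{\alpha_n,\beta_n}(X_{n,1},s)h_{\alpha_n,\beta_n}(X_{n,1},t))\to K_Z(s,t)$, from ${\cal B}(\alpha_n,\beta_n)\cd{\cal B}(\alpha_0,\beta_0)$ together with uniform integrability; and (iii) the Lindeberg condition, which likewise follows from uniform integrability of $\|h_{\alpha_n,\beta_n}(X_{n,1},\cdot)\|_{\mathbb{H}}^2$, e.g.\ via a uniform $L^{2+\delta}$ bound of the same kind. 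This yields $\breve{Z}_n\cd Z$ in $\mathbb{H}$, and hence, together with the two approximation steps, $Z_n\cd Z$. I expect this last step to be the main obstacle: since the score terms $\ell(\cdot,\alpha_n,\beta_n)$ are only logarithmically integrable rather than bounded, the moment estimates underlying (i), (ii) and (iii) must be made uniform along the sequence $(\alpha_n,\beta_n)\to(\alpha_0,\beta_0)$ rather than obtained from a plain boundedness argument; the two approximation steps, by contrast, are routine Taylor-expansion and empirical-process bookkeeping.
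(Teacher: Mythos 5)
Your proposal follows essentially the same route as the paper's proof: the same chain of approximations $Z_n\approx\widetilde{Z}_n\approx\breve{Z}_n$ (Taylor expansion of $g$ plus the linear MLE representation and a law of large numbers in $\mathbb{H}$), the observation that the summands of $\breve{Z}_n$ are centred row-wise i.i.d.\ elements of $\mathbb{H}$ via Corollary~\ref{cor:char} and the zero-mean score, and a triangular-array CLT in Hilbert space (the paper invokes Lemma~3.1 of \cite{CW:1998} after a one-dimensional Lindeberg--Feller step, which amounts to the same moment and covariance-convergence checks you describe). The argument is correct and no substantive difference from the paper's proof is present.
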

\begin{proof} A first order multivariate Taylor expansion at the point $(\alpha_n,\beta_n)$ and the consistency of the MLE estimators $(\widehat{\alpha}_n,\widehat{\beta}_n)$ show $\|Z_n-\widetilde{Z}_n\|_{\mathbb{H}}\cp0$ and an application of the law of large numbers in Hilbert spaces yields $\|\widetilde{Z}_n-\breve{Z}_n\|_{\mathbb{H}}\cp0$ after some calculations. Writing
\begin{equation*}
Z_{n,j}(t)=\left(\left(\alpha_n+\beta_n\right)X_{n,j}-\alpha_n\right)\mathbf{1}\{X_{n,j}\ge t\}+\ell(X_{n,j};\alpha_n,\beta_n)^\top\Upsilon(t,\alpha_n,\beta_n)-g(t,\alpha_n,\beta_n),\quad j=1,\ldots,n,
\end{equation*}
some algebra gives $\mathbb{E}Z_{n,j}=0$. Since $Z_{n,j}$, $j=1,\ldots,n$, are row-wise i.i.d., we have
\begin{equation*}
\lim_{n\rightarrow\infty}\mathbb{E}\|\breve{Z}_n\|^2_{\mathbb{H}}=\lim_{n\rightarrow\infty}\frac1n\sum_{j,k=1}^n\mathbb{E}\langle Z_{n,j},Z_{n,k}\rangle_{\mathbb{H}}=\int_0^1\mathbb{E}Z_{n,1}^2(t)\,\mbox{d}t<\infty.
\end{equation*}
Moreover, we see that $\lim_{n\rightarrow\infty}\mathbb{E}(Z_{n,1}(s)Z_{n,1}(t))=\mathbb{E}(Z(s)Z(t)),$ where $Z(\cdot)$ is the stated Gaussian element in the theorem. For each $\tau\in\mathbb{H}$, the Lindeberg-Feller central limit theorem implies
\begin{equation*}
\frac1{\sqrt{n}}\sum_{j=1}^n\langle Z_{n,j},\tau\rangle_{\mathbb{H}}\cd N(0,\sigma^2_{\alpha_0,\beta_0}(\tau)),
\end{equation*}
where $\sigma^2_{\alpha_0,\beta_0}(\tau)=\lim_{n\rightarrow\infty}\mathbb{E}\langle Z_{n,1},\tau\rangle_{\mathbb{H}}=\mathbb{E}\langle Z,\tau\rangle_{\mathbb{H}}$. Application of Lemma 3.1 in \cite{CW:1998} shows $\breve{Z}_n\cd \mathcal{Z}$ for some Gaussian element $\mathcal{Z}$ of $\mathbb{H}$ with covariance operator $\Sigma_{\alpha_0,\beta_0}$ satisfying $\sigma^2_{\alpha_0,\beta_0}(\tau)=\langle \Sigma_{\alpha_0,\beta_0}\tau,\tau\rangle_{\mathbb{H}}$ for each $\tau\in\mathbb{H}\setminus\{0\}$. Since the kernel $K_Z$ figuring in (\ref{eq:CovK}) satisfies
\begin{equation*}
\sigma^2_{\alpha_0,\beta_0}(\tau)=\int_0^1\int_0^1K_Z(s,t)\tau(s)\tau(t)\,\mbox{d}s\,\mbox{d}t,
\end{equation*}
the claim follows.
\end{proof}
By an application of the continuous mapping theorem, we have the following corollary.
\begin{corollary}\label{cor:nH}
Under the same assumptions as in Theorem \ref{thm:cdZ}, we have $T_{n}\cd \|Z\|_{\mathbb{H}}^2$ as $n\rightarrow\infty$.
\end{corollary}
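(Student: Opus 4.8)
The plan is to recognise $T_n$ as the squared $\mathbb H$-norm of the random element $Z_n$ introduced before Theorem \ref{thm:cdZ}, and then to apply the continuous mapping theorem. First I would rewrite the statistic by moving the factor $n$ inside the integral and the square, which gives
\begin{equation*}
T_n=\int_0^1\left|\frac{1}{\sqrt n}\sum_{j=1}^n\left(\big(\widehat\alpha_n+\widehat\beta_n\big)X_{n,j}-\widehat\alpha_n\right)\mathbf 1\{X_{n,j}\ge t\}-\sqrt n\,g(t,\widehat\alpha_n,\widehat\beta_n)\right|^2{\rm d}t .
\end{equation*}
Writing $\sqrt n\,g(t,\widehat\alpha_n,\widehat\beta_n)=\tfrac1{\sqrt n}\sum_{j=1}^n g(t,\widehat\alpha_n,\widehat\beta_n)$ and recalling, from Corollary \ref{cor:char}, that $g(t,\alpha,\beta)$ is precisely the common mean $\E\big[\big((\alpha+\beta)X-\alpha\big)\mathbf 1\{X\ge t\}\big]$ under ${\cal B}(\alpha,\beta)$, one sees that the integrand is exactly $Z_n(t)$, so that $T_n=\|Z_n\|_{\mathbb H}^2$.

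It then remains to pass to the limit. By Theorem \ref{thm:cdZ} we have $Z_n\cd Z$ in $\mathbb H$, where $Z$ is the centred Gaussian element with covariance kernel $K_Z$ from \eqref{eq:CovK}. The functional $\Phi\colon\mathbb H\to\R$, $\Phi(f)=\|f\|_{\mathbb H}^2$, is continuous on all of $\mathbb H$ (it is even Lipschitz on bounded subsets), so every point of $\mathbb H$ is a continuity point of $\Phi$. The continuous mapping theorem for weak convergence of random elements in a separable metric space therefore yields $\Phi(Z_n)\cd\Phi(Z)$, i.e.
\begin{equation*}
T_n=\|Z_n\|_{\mathbb H}^2\ \cd\ \|Z\|_{\mathbb H}^2\qquad(n\to\infty),
\end{equation*}
which is the assertion.

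Since the substantive part of the argument --- finite-dimensional convergence, tightness in $\mathbb H$, and identification of the limiting covariance operator --- is entirely contained in Theorem \ref{thm:cdZ}, there is no genuine obstacle at this stage; the only step requiring a little care is the first one, namely verifying that after rescaling by $\sqrt n$ the deterministic term $g(t,\widehat\alpha_n,\widehat\beta_n)$ exactly supplies the centering that turns the empirical sum into the process $Z_n$, so that the identity $T_n=\|Z_n\|_{\mathbb H}^2$ holds verbatim and the continuous mapping theorem applies with no remainder term to control.
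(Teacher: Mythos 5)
Your proof is correct and follows exactly the route the paper takes: the paper also obtains the corollary by noting $T_n=\|Z_n\|_{\mathbb H}^2$ and applying the continuous mapping theorem to Theorem \ref{thm:cdZ}. The only superfluous element is the appeal to Corollary \ref{cor:char} for the meaning of $g$, which is not needed for the purely algebraic identity $T_n=\|Z_n\|_{\mathbb H}^2$.
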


The distribution of $\|Z\|_{\mathbb{H}}^2$ is known to have the equivalent representation $\sum_{j=1}^\infty\lambda_j(\alpha_0,\beta_0)N_j^2$, where $N_1,N_2,\ldots$ are independent, standard normally distributed random variables, and $\lambda_1(\alpha_0,\beta_0),\lambda_2(\alpha_0,\beta_0),\ldots$ is the decreasing series of non-zero eigenvalues of the integral operator
\begin{equation*}
\mathcal{K}:\mathbb{H}\rightarrow\mathbb{H},\quad f\mapsto\mathcal{K}f(\cdot)=\int_0^1K_Z(\cdot,t)f(t)\mbox{d}t.
\end{equation*}
Clearly, this operator depends on the parameters $(\alpha_0,\beta_0)$. To calculate the eigenvalues $\lambda$ of $\mathcal{K}$, one has to solve the homogeneous Fredholm integral equation of the second kind
\begin{equation}\label{eq:inteq}
\int_0^1 K_Z(x,t)f(t)\mbox{d}t=\lambda f(x),\quad 0\le x\le1,
\end{equation}
see, e.g., \cite{KS:1947}. Due to the complexity of the covariance kernel, it seems hopeless to find explicit solutions of (\ref{eq:inteq}) and hence formulae for the eigenvalues. Furthermore, since the true parameters $(\alpha_0,\beta_0)$ are unknown in practice, the limiting null distribution cannot be used to derive critical values of the test. A solution to this problem is provided by a parametric bootstrap procedure as suggested in \cite{gurtler2000} and which is stated as follows:
\begin{enumerate}
  \item[(1)] Compute $\big(\widehat{\alpha}_n,\widehat{\beta}_n\big)=\big(\widehat{\alpha}_n(X_1,\ldots,X_n),\widehat{\beta}_n(X_1,\ldots,X_n)\big)$.
  \item[(2)] Conditionally on $\big(\widehat{\alpha}_n,\widehat{\beta}_n\big)$ simulate $B$ bootstrap samples $X_{j,1}^*,\ldots,X_{j,n}^*$, i.i.d. from $\mathcal{B}\left(\widehat{\alpha}_n,\widehat{\beta}_n\right)$, and compute $T_{n,j}^*=T_n(X_{j,1}^*,\ldots,X_{j,n}^*)$, $j=1,\ldots,B$.
  \item[(3)] Derive an empirical $(1-\upalpha)$-quantile of $c_{n,B}^*(\upalpha)$ of $T_{n,1}^*,\ldots,T_{n,B}^*$.
  \item[(4)] Reject the hypothesis (\ref{eq:H0}) at level $\upalpha$ if $T_n(X_1,\ldots,X_n)>c_{n,B}^*(\upalpha)$.
\end{enumerate}
Note that for each computation of $T_{n,j}^*$, parameter estimation has to be done separately for each $j$.
Following the notation and methodology of \cite{H:1996} we prove that this bootstrap test has asymptotic level $\upalpha$ as $n,B\rightarrow\infty$. Denote the distribution function of $T_n$ under $\mathcal{B}(\alpha_n,\beta_n)$ by
\begin{equation*}
H_n^{(\alpha_n,\beta_n)}(t)=\mathbb{P}_{(\alpha_n,\beta_n)}(T_n\le t),\quad t>0,
\end{equation*}
and write $H^{(\alpha_0,\beta_0)}(\cdot)$ for the distribution of $\|Z\|_{\mathbb{H}}^2$. Note that $H^{(\alpha_0,\beta_0)}$ is continuous and strictly increasing on $\{t>0: 0< H^{(\alpha_0,\beta_0)}(t) < 1\}$. By Corollary \ref{cor:nH} it holds that $H_n^{(\alpha_n,\beta_n)}(t) \rightarrow H^{(\alpha_0,\beta_0)}(t)$ for each $t>0$ as $n\rightarrow\infty$, so by continuity of $H^{(\alpha_0,\beta_0)}$ we have
\begin{equation*}
\sup_{t>0}\left|H_n^{(\alpha_n,\beta_n)}(t)-H^{(\alpha_0,\beta_0)}(t)\right|\longrightarrow0\quad \mbox{as}\,n\rightarrow\infty.
\end{equation*}
A combination of the last result with the consistency of the MLE $\left(\widehat{\alpha}_n,\widehat{\beta}_n\right)$ yields
\begin{equation*}
\sup_{t>0}\left|H_n^{(\widehat{\alpha}_n,\widehat{\beta}_n)}(t)-H^{(\alpha_0,\beta_0)}(t)\right|\cp0\quad \mbox{as}\,n\rightarrow\infty.
\end{equation*}
Hence, with $\widehat{H}_{n,B}(t)=\frac{1}{B}\sum_{j=1}^B\mathbf{1}\{T_{n,j}^*\le t\}$ denoting the empirical distribution function of $T_{n,1}^*,\ldots,T_{n,B}^*$, we have by an identical construction as in (3.10) of \cite{H:1996}
\begin{equation*}
\sup_{t>0}\left|\widehat{H}_{n,B}(t)-H^{(\alpha_0,\beta_0)}(t)\right|\cp0\quad \mbox{as}\,n,B\rightarrow\infty,
\end{equation*}
from which $c_{n,B}^*(\upalpha)\cp \inf\{t:H^{(\alpha_0,\beta_0)}(t)\ge 1-\upalpha\}$ as $n,B\rightarrow\infty$ follows. This implies that if $X_1,\ldots,X_n$ is a random sample from $\mathcal{B}(\alpha_0,\beta_0)$, we have
\begin{equation*}
\lim_{n,B\rightarrow\infty}\mathbb{P}_{(\alpha_0,\beta_0)}(T_n(X_1,\ldots,X_n)> c_{n,B}^*(\upalpha))=\upalpha,
\end{equation*}
ensuring an asymptotic level $\upalpha$ test.

\section{Consistency against alternatives}\label{sec:cons}
In this section, let $X$ be a random variable taking values in $[0,1]$, and let $X_1,\ldots,X_n,\ldots$ be i.i.d. copies of $X$. We assume that 
\begin{equation}\label{eq:convestAlt}
\big(\widehat{\alpha}_n,\widehat{\beta}_n\big)\cp (\alpha,\beta)\quad\mbox{as $n\rightarrow\infty$}
\end{equation}
for some $\alpha,\beta>0$. Notice that, since the random variables have compact support, each moment of $X$ is finite. In the next theorem $\fse$ stands for $\mathbb{P}$-almost sure convergence.
\begin{theorem}\label{thm:cons}
Under the stated assumptions, we have
\begin{equation*}
\frac{T_n}{n}\fse{\rm \Delta}_{\alpha,\beta}=\int_0^1\left|(\alpha+\beta)\E\left(X\mathbf{1}\{X\ge t\}\right)-\alpha\mathbb{P}(X\ge t)-\frac{t^\alpha(1-t)^{\beta}}{B(\alpha,\beta)}\right|^2\,{\rm d}t\quad \mbox{as}\;n\rightarrow\infty.
\end{equation*}
\end{theorem}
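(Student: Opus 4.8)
The plan is to rewrite $T_n/n$ as a squared Hilbert-space norm and to show that the underlying random element of $\mathbb{H}$ converges to the deterministic function whose squared norm is ${\rm \Delta}_{\alpha,\beta}$. Put
\begin{equation*}
V_n(t)=\frac1n\sum_{j=1}^n\left(\left(\widehat{\alpha}_n+\widehat{\beta}_n\right)X_j-\widehat{\alpha}_n\right)\mathbf{1}\{X_j\ge t\}-g(t,\widehat{\alpha}_n,\widehat{\beta}_n),
\end{equation*}
so that $T_n/n=\|V_n\|_{\mathbb{H}}^2$, and
\begin{equation*}
v(t)=(\alpha+\beta)\E\left(X\mathbf{1}\{X\ge t\}\right)-\alpha\,\mathbb{P}(X\ge t)-g(t,\alpha,\beta),
\end{equation*}
so that ${\rm \Delta}_{\alpha,\beta}=\|v\|_{\mathbb{H}}^2$ (note $v$ is bounded on $[0,1]$, hence lies in $\mathbb{H}$). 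Since $\big|\,\|V_n\|_{\mathbb{H}}-\|v\|_{\mathbb{H}}\,\big|\le\|V_n-v\|_{\mathbb{H}}$ and $\|V_n\|_{\mathbb H}$ stays bounded once $\|V_n-v\|_{\mathbb H}$ is small, it suffices to prove $\|V_n-v\|_{\mathbb{H}}\fse 0$; squaring then gives the claim.

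First I would decompose $V_n-v$ into three pieces: (i) an \emph{estimation-error} term $A_n(t)=\big(\widehat{\alpha}_n+\widehat{\beta}_n-\alpha-\beta\big)\frac1n\sum_{j}X_j\mathbf{1}\{X_j\ge t\}-\big(\widehat{\alpha}_n-\alpha\big)\frac1n\sum_{j}\mathbf{1}\{X_j\ge t\}$; (ii) an \emph{empirical-process} term $B_n(t)=(\alpha+\beta)\big(\frac1n\sum_j X_j\mathbf{1}\{X_j\ge t\}-\E(X\mathbf{1}\{X\ge t\})\big)-\alpha\big(\frac1n\sum_j\mathbf{1}\{X_j\ge t\}-\mathbb{P}(X\ge t)\big)$; and (iii) a \emph{kernel} term $C_n(t)=g(t,\widehat{\alpha}_n,\widehat{\beta}_n)-g(t,\alpha,\beta)$. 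For (i), both empirical averages lie in $[0,1]$ uniformly in $t$, so $\|A_n\|_{\mathbb{H}}\le|\widehat{\alpha}_n+\widehat{\beta}_n-\alpha-\beta|+|\widehat{\alpha}_n-\alpha|\to0$ by (\ref{eq:convestAlt}). For (ii), the maps $x\mapsto x\mathbf{1}\{x\ge t\}$ and $x\mapsto\mathbf{1}\{x\ge t\}$ are uniformly bounded and monotone in $t$, so the strong law of large numbers together with dominated convergence (equivalently, a Glivenko--Cantelli argument) yields almost sure $L^2$-convergence $\|B_n\|_{\mathbb{H}}\to0$.

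Term (iii) is the one needing a genuine argument. Fixing a compact neighbourhood $K\subset(0,\infty)^2$ of $(\alpha,\beta)$, the map $(t,a,b)\mapsto g(t,a,b)=t^a(1-t)^b/B(a,b)$ extends continuously to the compact set $[0,1]\times K$: the beta function $B$ is continuous and bounded away from $0$ on $K$, and $t^a(1-t)^b\to0$ as $t\downarrow0$ or $t\uparrow1$ uniformly over $K$ because the exponents stay bounded away from $0$. Hence $g$ is uniformly continuous on $[0,1]\times K$, which gives $\sup_{t\in[0,1]}|g(t,a,b)-g(t,\alpha,\beta)|\to0$ as $(a,b)\to(\alpha,\beta)$ within $K$; applied with $(a,b)=(\widehat{\alpha}_n,\widehat{\beta}_n)$, which eventually lies in $K$ by (\ref{eq:convestAlt}), this yields $\|C_n\|_{\mathbb{H}}\le\sup_t|C_n(t)|\to0$. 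Combining (i)--(iii) by the triangle inequality gives $\|V_n-v\|_{\mathbb{H}}\to0$, hence $T_n/n\to{\rm \Delta}_{\alpha,\beta}$. Finally, by Corollary \ref{cor:char} one has ${\rm \Delta}_{\alpha,\beta}=0$ precisely when $X\sim\mathcal{B}(\alpha,\beta)$, which is what makes this limit useful for establishing consistency of the test.

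The only subtlety worth flagging is bookkeeping the mode of convergence: piece (ii) is genuinely almost sure via the strong law, whereas (i) and (iii) inherit whatever convergence $(\widehat{\alpha}_n,\widehat{\beta}_n)$ enjoys; for strongly consistent estimators (such as the MLE or moment estimators) the conclusion then holds almost surely as stated, while in general the same argument delivers convergence in probability. Everything else — the continuity and one-sided vanishing of $g$, the elementary norm bounds, and the Glivenko--Cantelli/dominated-convergence step — is routine.
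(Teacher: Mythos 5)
Your proof is correct, but it takes a genuinely different and more self-contained route than the paper's. The paper recycles the machinery of Theorem \ref{thm:cdZ}: it shows $n^{-1/2}\|Z_n-\breve{Z}_n\|_{\mathbb{H}}\fse0$ via the linearisation of the MLE through the score-based process $\breve{Z}_n$, then applies the Hilbert-space law of large numbers and the reverse triangle inequality to get $n^{-1}\|\breve{Z}_n\|_{\mathbb{H}}^2\fse\Delta_{\alpha,\beta}$. Your argument never invokes the stochastic expansion of the estimators: the decomposition $V_n-v=A_n+B_n-C_n$ handles the estimation error purely through the consistency of $(\widehat{\alpha}_n,\widehat{\beta}_n)$, the uniform boundedness of the empirical averages, and the uniform continuity of $g$ on $[0,1]\times K$. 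This buys something real. Under a fixed alternative, the linear representation of the MLE around the pseudo-true parameter (and the vanishing of the limiting score contribution $\E\,\ell(X;\alpha,\beta)=0$, needed so that $n^{-1}\|\breve Z_n\|_{\mathbb{H}}^2$ actually converges to $\Delta_{\alpha,\beta}$ and not to something else) requires justification beyond what the null-hypothesis proof provides, whereas your route needs only the assumption (\ref{eq:convestAlt}) that is actually stated. The price is that you must supply the Glivenko--Cantelli/dominated-convergence step for term (ii) by hand, which the paper absorbs into a single appeal to the Hilbert-space SLLN; either is fine. Your closing caveat about the mode of convergence is apposite and applies equally to the paper's own proof: since (\ref{eq:convestAlt}) is only convergence in probability, both arguments literally deliver $T_n/n\cp\Delta_{\alpha,\beta}$ unless the estimators are strongly consistent, which is harmless for the consistency of the test.
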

\begin{proof} Notice that, using exactly the same arguments as in the proof of Theorem \ref{thm:cdZ}, we have $n^{-1/2}\|Z_n-\widetilde{Z}_n\|_{\mathbb{H}}\fse0$ and $n^{-1/2}\|\widetilde{Z}_n-\breve{Z}_n\|_{\mathbb{H}}\fse0$ as $n\rightarrow\infty$. Hence, by the triangle inequality we also have $n^{-1/2}\|Z_n-\breve{Z}_n\|_{\mathbb{H}}\fse0$. By the law of large numbers in Hilbert space, it is easy to see that $n^{-1/2}\max(\|Z_n\|_{\mathbb{H}},\|\breve{Z}_n\|_{\mathbb{H}})$ converges in probability to a finite constant. Consequently the reverse triangle inequality implies
\begin{equation*}
n^{-1}\left|T_n-\|\breve{Z}_n\|_{\mathbb{H}}^2\right|=n^{-1}\left|\|Z_n\|_{\mathbb{H}}-\|\breve{Z}_n\|_{\mathbb{H}}\right|\left|\|Z_n\|_{\mathbb{H}}+\|\breve{Z}_n\|_{\mathbb{H}}\right|\le 2n^{-1}\|Z_n-\breve{Z}_n\|_{\mathbb{H}}\max(\|Z_n\|_{\mathbb{H}},\|\breve{Z}_n\|_{\mathbb{H}})\fse0.
\end{equation*}
By the law of large numbers in Hilbert space and the continuous mapping theorem, we have $n^{-1}\|\breve{Z}_n\|_{\mathbb{H}}^2\fse{\rm \Delta}_{\alpha,\beta}$ and the assertion follows.
\end{proof}
In view of Corollary \ref{cor:char} $\Delta_{\alpha,\beta}=0$ holds if and only if $X$ follows the beta law, and $\Delta_{\alpha,\beta}>0$ otherwise. From the proof of the bootstrap procedure, for a given nominal level $\upalpha$ and the bootstrap critical values $c_{n,B}^*(\upalpha)$ as in Section \ref{sec:AuN} we have $c_{n,B}^*(\upalpha)\cp c$ for $n,B\rightarrow\infty$, for some fixed positive value $c<\infty$. A direct consequence of Theorem \ref{thm:cons} is
\begin{equation*}
\mathbb{P}(T_n>c_{n,B}^*(\upalpha))\rightarrow1\quad\mbox{as}\;n,B\rightarrow\infty,
\end{equation*}
which implies consistency of $T_n$ against any alternative distribution $\mathbb{P}^X$ for which (\ref{eq:convestAlt}) is satisfied.

\section{Simulations}\label{sec:simu}
In this section we investigate the finite-sample performance of our test given in (\ref{teststatTn}), when compared to competitive tests against general alternatives with the use of a Monte Carlo study. A significance level of 10\% ($\upalpha = 0.1$) is used throughout the simulation study, and all calculations were performed using the statistical computing environment \texttt{R}, \cite{RCT2013}. The competitive tests considered in this study are the classical tests based on the empirical distribution function, i.e., the Kolmogorov-Smirnov ($KS_n$), Cramer-von Mises ($CM_n$) and Anderson-Darling ($AD_n$) tests (see \cite{dagostino1986}, chapter 4,  for a discussion on these tests), as well as the test based on the Laplace transform by \cite{RA:2018}. The approach introduced in \cite{RA:2018} considers the differential equation,
$({t}/{\alpha}) L^{\prime \prime}(t)+\left({(\alpha+\beta)}/{\alpha}+({t}/{\alpha})\right) L^{\prime}(t)+L(t)=0$
with the Laplace transform, $L(t)=\E(\exp (-t X))$ of the beta distribution and boundary conditions $L(0) = 1$, $L^{\prime}(0)={-\alpha}/{(\alpha+\beta)}$ to construct the test statistic
\begin{equation*}\begin{aligned}
R F_{n, a}:=\frac{1}{n} \sum_{i,j=1}^{n}\biggl(\frac{2\left[X_{i} X_{j}\left(X_{j}\left(X_{i}-2\right)+1\right)\right] / \widehat{\alpha}_{n}^{2}}{\left(X_{i}+X_{j}+a\right)^{3}}+& \frac{2\left[X_{i} X_{j}\left(1-X_{j}\right)-X_{i} \bar{X}_{n}\left(1-X_{i}\right)\right] / \widehat{\alpha}_{n}^{2} \bar{X}_{n}}{\left(X_{i}+X_{j}+a\right)^{2}} \\
&+\frac{\left[X_{i} X_{j}-2 X_{i} \bar{X}_{n}+\bar{X}_{n}^{2}\right] / \bar{X}_{n}^{2}}{\left(X_{i}+X_{j}+a\right)}\biggr),
\end{aligned}\end{equation*}
where $\bar{X}_{n}=\widehat{\alpha}_{n}/(\widehat{\alpha}_{n}+\widehat{\beta}_{n})$ and $\widehat{\alpha}_{n}$, $\widehat{\beta}_{n}$ are consistent estimators. The tuning parameter $a>0$ provides the above test with a degree of flexibility in terms of power by making it either more sensitive to departures from the beta distribution around zero or near 1 of the distribution. \\
Since the limit null distribution of the tests depend on the shape parameters $\alpha$ and $\beta$, the critical values are obtained with the parametric bootstrap as stated in Section \ref{sec:AuN}, and with the given algorithm.
For the simulation study, the maximum likelihood estimates ($\widehat{\alpha}_{n}$,$\widehat{\beta}_{n}$) are calculated in each Monte Carlo replication with the \texttt{EnvStats} package (see, \cite{EnvStats2013}), which solves the equations,
\begin{equation*}\begin{array}{c}
\Psi(\widehat{\alpha}_n)-\Psi(\widehat{\alpha}_n+\widehat{\beta}_n)= \frac{1}{n} \sum_{i=1}^{n} \log \left(X_{i}\right), \\
\Psi(\widehat{\beta}_n)-\Psi(\widehat{\alpha}_n+\widehat{\beta}_n)= \frac{1}{n} \sum_{i=1}^{n} \log \left(1-X_{i}\right),
\end{array}\end{equation*}
simultaneously. 
For each replication, $B=500$ bootstrap samples are generated. To ensure a more accurate empirical level at a reduced number of bootstrap replication we use the modified critical values of \cite{gurtler2000} given by
\begin{equation*}
\tilde{c}_{n,B}:=T_{(B-[0.1(B+1)])}^{*}+0.90\left(T_{(B-[0.1(B+1)]+1)}^{*}-T_{(B-[0.1(B+1)])}^{*}\right) = T_{(450)}^{*}+0.90\left(T_{(451)}^{*}-T_{(450)}^{*}\right),
\end{equation*}
where $T_{n,(j)}^{*}$ denotes the $j$th order statistic of the bootstrap sample of values of the test statistic $T_{n,1}^{*}, T_{n,2}^{*}, \dots, T_{n,B}^{*}$. This modification is made to $c_{n,B}^*(\upalpha)$ in step 3 of the parametric bootstrap algorithm. \\ \\
The alternative distributions that are used in the simulation study are defined as follows:
\begin{itemize}
\item The truncated normal distribution ${TN}(\mu, \sigma^2)$ with density given by
\begin{equation*}
f_{TN}(\mu, \sigma)=\frac{\exp \left(-\frac{1}{2}\left(\frac{x-\mu}{\sigma}\right)^{2}\right)}{\sigma \sqrt{2 \pi}\left(\Phi\left(\frac{1-\mu}{\sigma}\right)-\Phi\left(-\frac{\mu}{\sigma}\right)\right)}, \quad 0<x<1,
\end{equation*}
where $\Phi(x)=\frac{1}{2}(1+\operatorname{erf}(x / \sqrt{2})$ is the normal distribution function.
\item The composite distribution $BN\left(p, \alpha, \beta, \mu, \sigma^{2}\right)$ consisting of the beta distribution, ${\cal B}(\alpha,\beta)$, and the truncated normal distribution, $TN(\mu, \sigma^2)$. That is,
\begin{equation*}
BN\left(p, \alpha, \beta, \mu, \sigma^{2}\right):=\mathbf{1}_{\{X \leq p\}} X_{1}+\mathbf{1}_{\{X>p\}} X_{2}, \quad 0<p<1,
\end{equation*}
with $X\sim U(0,1)$, $X_1 \sim {\cal B}(\alpha,\beta)$, $X_2 \sim TN(\mu, \sigma^2)$ ($X$, $X_1$ and $X_2$ are independent) and $U(0,1)$ is the standard uniform distribution.
\item The Logit-normal distribution $LT(\mu, \sigma^2)$ with density given by
\begin{equation*}
f_{LT}(\mu, \sigma)=\frac{1}{\sigma \sqrt{2 \pi} x(1-x)} \exp \left(-\frac{(\operatorname{logit}(x)-\mu)^{2}}{2 \sigma^{2}}\right), \quad 0 \leq x \leq 1,
\end{equation*}
where $\operatorname{logit}(x)=\log \{x /(1-x)\}$.
\item The families of distributions $F\circ G$ of random variables generated by the following construction: For an i.i.d. sample $X_1,\ldots,X_n$ with $X_1\sim G$, generate the i.i.d. sample $F(X_1),\ldots,F(X_n)$ with the selection of distributions given in Table \ref{tab:disttable}.
\begin{table}[b]
  \centering
  \caption{Distributions for the construction of $F\circ G$}
    \begin{tabular}{|p{4em}|l|p{4em}|l|c|}
    \hline
    \multicolumn{1}{|c|}{Distribution} & \multicolumn{1}{c|}{$F(x)$} & \multicolumn{1}{c|}{Distribution} & \multicolumn{1}{c|}{$G(x)$} & $F\circ G$ \\
    \hline
    Cauchy\newline{}$C(\theta)$ & $\frac{1}{\pi} \arctan\left(\frac{x}{\theta}\right)+\frac{1}{2}$      & Gompertz\newline{}$GO(\eta,\nu)$ &$1-\exp \left(-\eta \left(e^{{\nu x}}-1\right)\right)$ & $C(\theta)\circ GO(\eta,\nu)$  \\
    exponential\newline{}$EXP(\lambda)$ & $\displaystyle 1-e^{-\lambda x}$ & halfnormal\newline{}$HN(\sigma)$ & ${\displaystyle \operatorname {erf} \left({\frac {x}{\sigma {\sqrt {2}}}}\right)}$ & $EXP(\lambda)\circ HN(\sigma)$ \\
    normal\newline{}$\Phi(\mu,\sigma^2)$ & ${\displaystyle {\frac {1}{2}}\left[1+\operatorname {erf} \left({\frac {x-\mu }{\sigma {\sqrt {2}}}}\right)\right]}$ & Laplace\newline{}$L(\mu,\vartheta)$ & ${\displaystyle {\begin{cases}{\frac {1}{2}}\exp \left({\frac {x-\mu }{\vartheta}}\right){\text{if }}x\leq \mu \\[8pt]1-{\frac {1}{2}}\exp \left(-{\frac {x-\mu }{\vartheta}}\right){\text{if }}x\geq \mu \end{cases}}}$ & $\Phi(\mu,\sigma^2)\circ L(\mu,\vartheta)$ \\
    \hline
    \end{tabular}%
  \label{tab:disttable}%
\end{table}%
%
\end{itemize}
Empirical rejection rates are calculated for sample sizes $n=50$ and $n=100$ using $10 \ 000$ independent Monte Carlo replications. Table \ref{tab:powern50} and \ref{tab:powern100} contain the estimated powers in percentages (rounded to the nearest integer) for the new and competitive tests for a sample size of $n=50$ (Table \ref{tab:powern50}) and $n=100$ (Table \ref{tab:powern100}) against each of the alternative distribution. For ease of comparison, we highlighted the highest power for each alternative.
\begin{table}[t]
  \centering
  \caption{Estimated powers for $n = 50$, significance level of $10\%$ and 10000 replications}
    \begin{tabular}{lccccccc}
    \hline
          & $KS_n$ & $CM_n$ & $AD_n$ & $RF_{n,0.25}$ & $RF_{n,2}$ & $RF_{n,5}$ & $T_n$ \\
    \hline
    ${\cal B}(0.5,0.5)$     & 10    & 10    & 10    & 9    & 10    & 11    & 9 \\ 
    ${\cal B}(1,1)$     & 10    & 10    & 10    & 9     & 11    & 10     & 9 \\ 
    ${\cal B}(2,2)$     & 10    & 10    & 10    & 10     & 9     & 10     & 8 \\ 
    ${\cal B}(0.5,1.5)$     & 10    & 10    & 10    & 9    & 10    & 10    & 9 \\ 
    ${\cal B}(0.5,3)$     & 10    & 10    & 10    & 11    & 10    & 10    & 9 \\ 
    ${\cal B}(1.5,0.5)$     & 10    & 10    & 10    & 9     & 9    & 9    & 9 \\ 
    ${\cal B}(3,0.5)$     & 10    & 10    & 11    & 8     & 8     & 8    & 10 \\ 
    $BN(0.25,0.5,0.5,0.25,0.25)$     & 55    & 61    & 64    & 54    & 51    & 56    & \textbf{71} \\ 
    $BN(0.5,0.5,0.5,0.25,0.25)$     & 49    & 55    & 55    & 45    & 41    & 46   & \textbf{62} \\ 
    $BN(0.75,0.5,0.5,0.25,0.25)$    & 24    & \textbf{27} & 26    & 21    & 21    & 23    & \textbf{27} \\ 
    $BN(0.25,2,2,0.25,0.25)$    & 14    & 15    & 16    & \textbf{17}    & 16    & 16    & 14 \\ 
    $BN(0.25,1.5,0.5,0.25,0.25)$    & 75    & 80    & 81    & 74    & 78    & 81    & \textbf{85} \\ 
    $TN(0.25,0.25)$    & 17    & 19    & \textbf{21}    & 22    & 22    & \textbf{21}    & 17 \\ 
    $TN(0.5,0.25)$    & 19    & 22    & 24    & \textbf{25}    & 24    & 23    & 22 \\ 
    $TN(0.25,0.5)$    & 12    & 13    & 14    & 12    & 14    & \textbf{15}    & \textbf{15} \\ 
    $LT(3,2)$    & 70    & 78    & \textbf{80} & 63    & 64    & 65    & 76 \\ 
    $LT(1,2)$    & 22    & 25    & 25    & 16    & 13    & 15    & \textbf{26} \\ 
    $LT(0.5,3)$    & 17    & 18    & \textbf{20} & 16 &	7 &	7 & 19 \\ 
    $C(1)\circ GO(2,1)$    &  63   &  72   &  85   &  69  & 69  &   67  &  \textbf{89} \\ 
    $EXP(1)\circ HN(1)$    & 26    & 30    & 31 & 34    & 36    & \textbf{38}    & 31 \\ 
    $\Phi(0,1)\circ L(2,0.5)$    & 70    & 75    & 75    & 70    & 71    & 71    & \textbf{77} \\ 
    \hline
    \end{tabular}%
  \label{tab:powern50}%
\end{table}%
\begin{table}[t]
  \centering
  \caption{Estimated powers for $n = 100$, significance level of $10\%$ and 10000 replications}
    \begin{tabular}{lccccccc}
    \hline
          & $KS_n$ & $CM_n$ & $AD_n$ & $RF_{n,0.25}$ & $RF_{n,2}$ & $RF_{n,5}$ & $T_n$ \\
    \hline
        ${\cal B}(0.5,0.5)$      & 11    & 10    & 10    & 9     & 10    & 11    & 10 \\ 
        ${\cal B}(1,1)$      & 11    & 10    & 10    & 9     & 10    & 11    & 9 \\ 
        ${\cal B}(2,2)$      & 10    & 10    & 10    & 9    & 10    & 10     & 9 \\ 
        ${\cal B}(0.5,1.5)$      & 10    & 10    & 10    & 10     & 10    & 10    & 9 \\ 
        ${\cal B}(0.5,3)$      & 10    & 9     & 10    & 10    & 10    & 10    & 10 \\ 
        ${\cal B}(1.5,0.5)$      & 10    & 10    & 10    & 10    & 9    & 10    & 10 \\ 
        ${\cal B}(3,0.5)$      & 10    & 10    & 11    & 9     & 9    & 9    & 10 \\ 
        $BN(0.25,0.5,0.5,0.25,0.25)$      & 79    & 86    & 88    & 80    & 69    & 74    & \textbf{93} \\ 
        $BN(0.5,0.5,0.5,0.25,0.25)$      & 76    & 82    & 82    & 72    & 60    & 67    & \textbf{88} \\
        $BN(0.75,0.5,0.5,0.25,0.25)$     & 38    & 42    & 41    & 34    & 28    & 33    & \textbf{45} \\
        $BN(0.25,2,2,0.25,0.25)$     & 16    & 20    & \textbf{21}    & \textbf{21}    & 18 & 17    & 19 \\ 
        $BN(0.25,1.5,0.5,0.25,0.25)$     & 95    & 98    & 98    & 95    & 96    & 97    & \textbf{99} \\
        $TN(0.25,0.25)$     & 22    & 26    & 28    & \textbf{30}    & 27 & 25    & 25 \\ 
        $TN(0.5,0.25)$     & 26    & 31    & \textbf{33}    & 32    & 27    & 25 & 32 \\ 
        $TN(0.25,0.5)$     & 15    & 16    & 16    & 14    & 16    & 17    & \textbf{19} \\ 
        $LT(3,2)$     & 93    & \textbf{98} & \textbf{98} & 89    & 90    & 91    & 96 \\ 
        $LT(1,2)$     & 37    & 42    & 45    & 26    & 21    & 24 & \textbf{48} \\ 
        $LT(0.5,3)$     & 26    & 31    & \textbf{36}    & 28    & 7 & 7    & 34 \\ 
        $C(1)\circ GO(2,1)$  & 92 & 98 & 99 & 96   &  95  &  95  & \textbf{100} \\
        $EXP(1)\circ HN(1)$     & 43    & 50    & 51    & 58    & 59    & \textbf{61}    & 55 \\ 
        $\Phi(0,1)\circ L(2,0.5)$     & 92    & \textbf{95} & \textbf{95} & 93    & 92    & 93    & \textbf{95} \\
    \hline
    \end{tabular}%
  \label{tab:powern100}%
\end{table}%
It is clear that each of the tests achieve the nominal significance level. The newly proposed test $T_n$, performs better than the classical tests except for a few selected cases, where it is still a strong competitor. The best test for the composite distributions is $T_n$ by a fair margin, except for $BN(0.25,2,2,0.25,0.25)$, where $RF_{n,0.25}$ exhibits a slightly better performance for $n=50$ and matches the $AD_n$ test for $n=100$. The test based on the empirical Laplace transform $RF_{n,a}$ is more competitive for the truncated distributions, except for $TN(0.25,0.5)$, where $T_n$ outperforms its counterparts for $n=100$.  It is evident that the performance of $RF_{n,a}$, is largely influenced by the choice of the tuning parameter. In this regard, our new test also performs better for a variety of the considered alternatives, and it has the added advantage of not being dependent on the choice of a tuning parameter. Overall, the newly proposed test shows very good results, and it proves to be the superior choice in most cases. It is clear that there is no 'best' test, insofar as no procedure can outperform the other procedures uniformly, for details see \cite{J:2000}. Since it is known that the power of the tests can depend on the method of estimation, see \cite{DKO:1990}, we performed a simulation study with the same simulation design but replacing the MLE by the moment estimators of $(\alpha,\beta)$. All the procedures loose power by this estimation method and hence the results are dominated by the empirical powers in Tables \ref{tab:powern50} and \ref{tab:powern100}, so we decided not to state them here.

\section{Real data examples}\label{sec:real}
For a practical application we use the data presented in \cite{R:2011} from the Haarweg Wageningen weather station of the relative humidity of air in May 2007 and 2008. The data is given below for convenience. Figure \ref{fig:betafit} gives an indication of how the data fits a beta distribution. We see from the histogram and density plots in the top panels that the data of May 2007 fits a beta distribution relatively well, whereas the data of May 2008 does not. Here the density plots in red are the distributions ${\cal B}(6.356,1.970)$ and ${\cal B}(2.803,1.456)$, where the parameters have been estimated from the data with the maximum likelihood method. The Q-Q plots in the bottom panels of Figure \ref{fig:betafit} further strengthens this conclusion.
	\begin{figure}[t]
	\centering
		\includegraphics[width=0.70\textwidth]{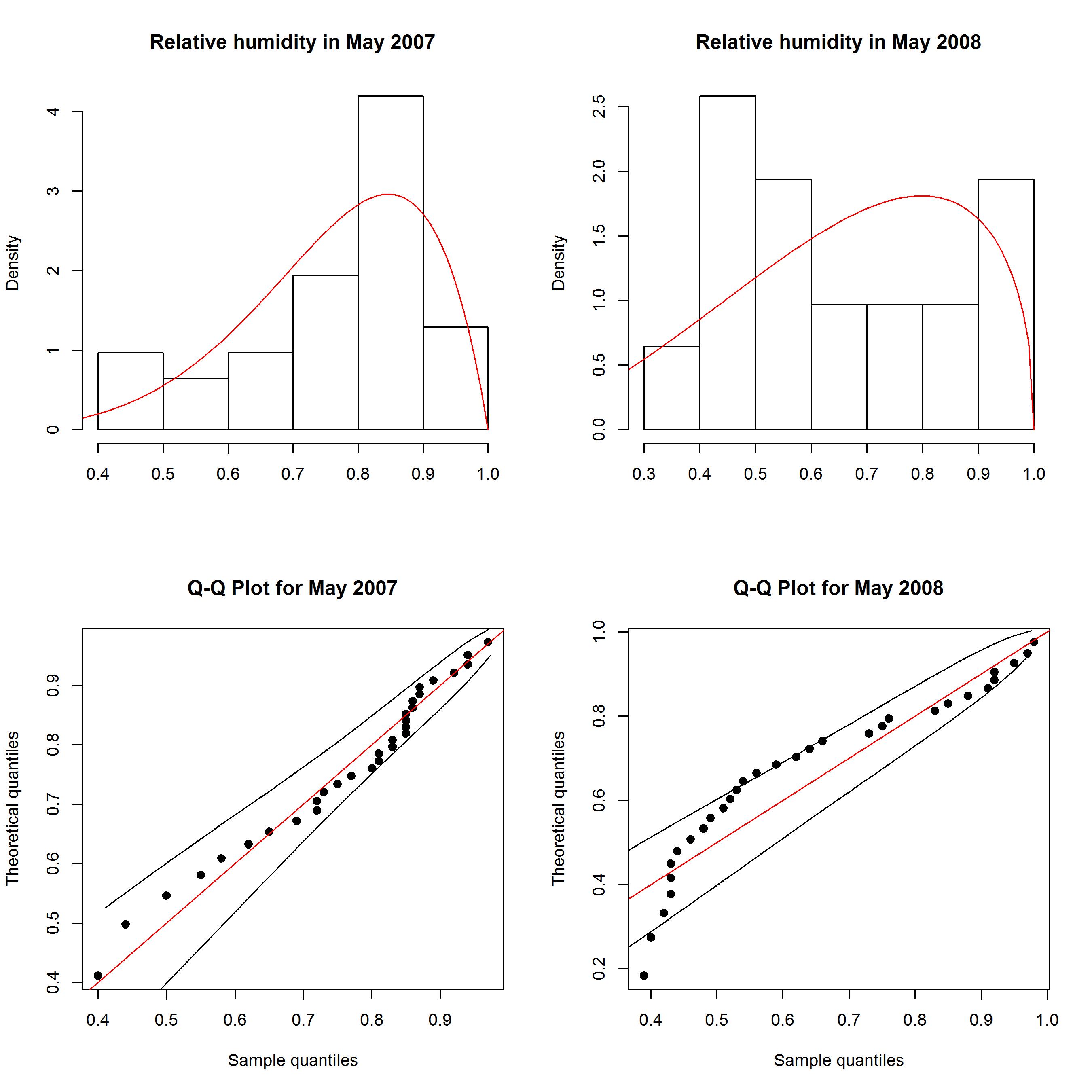}
	\caption{\label{fig:betafit} Histogram (top) and Q-Q plots (bottom) for relative humidity data.}
	\end{figure}
\begin{table}[t]
\caption{Relative humidity data of air in May 2007 and 2008 from Haarweg Wageningen weather station (see, \protect\cite{R:2011})}
\begin{minipage}{.48\textwidth}
  \centering
  \caption*{Relative humidity data for May 2007}
    \begin{tabular}{cccccccc}
    0.40  & 0.44  & 0.50  & 0.55  & 0.58  & 0.62  & 0.65  & 0.69 \\
    0.72  & 0.72  & 0.73  & 0.75  & 0.77  & 0.80  & 0.81  & 0.81 \\
    0.83  & 0.83  & 0.85  & 0.85  & 0.85  & 0.85  & 0.86  & 0.86 \\
    0.87  & 0.87  & 0.89  & 0.92  & 0.94  & 0.94  & 0.97  &  \\
    \end{tabular}%
  \label{tab:realdat1}%
\end{minipage}\hfill
\begin{minipage}{.48\textwidth}
  \centering
  \caption*{Relative humidity data for May 2008}
    \begin{tabular}{rrrrrrrr}
    0.39  & 0.4   & 0.42  & 0.43  & 0.43  & 0.43  & 0.44  & 0.46 \\
    0.48  & 0.49  & 0.51  & 0.52  & 0.53  & 0.54  & 0.56  & 0.59 \\
    0.62  & 0.64  & 0.66  & 0.73  & 0.75  & 0.76  & 0.83  & 0.85 \\
    0.88  & 0.91  & 0.92  & 0.92  & 0.95  & 0.97  & 0.98  &  \\
    \end{tabular}%
  \label{tab:realdat2}%
\end{minipage}
\end{table}

We thus apply the various tests of fit to the two data sets to investigate their behaviour on real world data. Critical values were obtained by calculating the tests on the data, and then bootstrap \textit{p}-values were obtained by first generating $10 \ 000$ samples from the beta distribution with the estimated parameters. Tests were calculated on each of the samples and the number of times the test statistic was greater than the critical value were recorded. Dividing this value by the total number of samples gives the bootstrap \textit{p}-value. The bootstrap \textit{p}-values for the two data sets are displayed in Table \ref{tab:realdatpval}. We see that the null hypothesis that the data is beta distributed is not rejected by any of the tests for the May 2007 data. However, each of the tests rejects the null hypothesis for the May 2008 data set. This observation confirms the information gathered from the plots in Figure \ref{fig:betafit}.
\begin{table}[t]
  \centering
  \caption{Bootstrap $p$-values for relative humidity data}
    \begin{tabular}{lccccccc}
    \hline
          & $KS_n$ & $CM_n$ & $AD_n$ & $RF_{n,0.25}$ & $RF_{n,2}$ & $RF_{n,5}$ & $T_n$ \\
    \hline
    May 2007 & 0.306 & 0.211 & 0.305 & 0.378 & 0.364 & 0.353 & 0.350 \\
    May 2008 & 0.024 & 0.003 & 0.002 & 0.041 & 0.044 & 0.036 & 0.007 \\
    \end{tabular}%
  \label{tab:realdatpval}%
\end{table}%

\section{Comments and Outlook}\label{sec:CaO}
In the previous sections we have shown that our testing procedure is competitive to the already known procedures. Interestingly, some characterisations of the beta law cannot be used to construct similar testing procedures, since they do not hold for all parameter combinations of $\alpha$ and $\beta$, for a specific example of such limitations see Example 13 in \cite{BE:2019}.

We finish the paper by stating some comments and giving some outlook for further research. From first calculations using the computer algebra system Maple, \cite{Maple2019}, we think that the calculation of an explicit representation of the covariance kernel in (\ref{eq:CovK}) might be possible. An explicit formula can open ground to numerical approximation of the eigenvalues of the integral operator $\mathcal{K}$ and would offer more theoretical insights. The approximation of the largest eigenvalue is, as is well known, useful for efficiency statements in the sense of Bahadur, see \cite{B:1960} and \cite{N:1995}. More insights may provide the asymptotic distribution of the test statistic under fixed alternatives. A step in this direction is given by Theorem 1 in \cite{BEH:2017}, which leads to the conjecture that
\begin{equation*}
\sqrt{n}\left(\frac{T_{n}}{n}-\Delta_{\alpha,\beta}\right)\cd \mbox{N}(0,\sigma^2(\alpha,\beta)), \quad \mbox{as}\,n\rightarrow\infty,
\end{equation*}
where $\sigma^2(\alpha,\beta)>0$ is a specified limiting variance.
Note that for the four parameter beta distribution with support on an interval $[a,b]$ (for a definition, see p.210 of \cite{JKB:1995}) and i.i.d. random variables $Y_1,\ldots,Y_n$, one can transform the data by
\begin{equation*}
X_j=\frac{Y_j-a}{b-a},\quad j=1,\ldots,n,
\end{equation*}
to a two parameter beta distribution with support in $[0,1]$. Hence, $T_n$ in (\ref{teststatTn}) can also be used to test the fit to a four parameter beta distribution if the support is known. A goodness of fit test for the situation of unknown support as well as unknown shape parameters of the beta distribution has not yet been investigated in the literature, in contrast to fitting a four parameter beta distribution to data, see \cite{AHW:1994}. We point out that the characterization given in Theorem 1 of \cite{A:1991} covers this family, but parameter estimation of the boundary points $a$ and $b$ is tricky, especially if linear expansions are needed for asymptotic theory, see \cite{W:1971} for a starting point.

\section*{Acknowledgement}
The authors thank Norbert Henze for fruitful comments and Fabio Rodrigues for the implementation of the test statistic $RF_{n,a}$ in \texttt{R}.

\bibliographystyle{abbrv}
\bibliography{lit-AC}  

%

\end{document}